\newtheorem{lemma}{Lemma}
\newcommand{\RomanNumeralCaps}[1]
\def\e{e}
\def\mathsfbi{}
\def\mathsfi{}
\newcommand\changes[1]{{\textcolor{black}{#1}}}
\title{Bayesian comparison of stochastic models of dispersion}
\author{Martin T. Brolly}
\author{James R. Maddison}
\author{Aretha L. Teckentrup}
\author{Jacques Vanneste}
\affil{School of Mathematics and Maxwell Institute for Mathematical Sciences, University of Edinburgh,
King’s Buildings, Edinburgh EH9 3FD, UK}
\begin{document}
\maketitle

\begin{abstract}

Stochastic models of varying complexity have been proposed to describe the dispersion of particles in turbulent flows, from simple Brownian motion to complex temporally and spatially correlated models. A method is needed to compare competing models, accounting for the difficulty in estimating the additional parameters that more complex models typically introduce.
We employ a data-driven method, Bayesian model comparison (BMC), which assigns probabilities to competing models based on their ability to explain observed data.
We focus on the comparison between the Brownian and Langevin dynamics for particles in two-dimensional isotropic turbulence, with data that consists of sequences of particle positions obtained from simulated Lagrangian trajectories. We show that, while on sufficiently large timescales the models are indistinguishable, there is a range of timescales on which the Langevin model outperforms the Brownian model. While our set-up is highly idealised, the methodology developed is applicable to more complex flows and models of particle dynamics.
\end{abstract}

\section{Introduction}

Since Taylor introduced the notion of turbulent diffusion in the 1920s~\citep{Taylor}, a wide variety of stochastic models have been proposed to represent the dynamics of particles in turbulent flows \citep[e.g.][]{Thomson, Rodean, Majda_Kramer,Berloffii}. 
The Brownian dynamics used by Taylor models Lagrangian velocities as white noise processes and is a good approximation only on sufficiently long time scales. More complex models incorporate temporal and/or spatial correlation~\citep[e.g.][]{Griffa1996, Pasquero, lilly2017fractional}.
For example, Langevin dynamics incorporate autocorrelation in Lagrangian velocities by representing them as Ornstein--Uhlenbeck processes~\citep{OU}. It is in general unclear when such additional complexity leads to improved predictions rather than to overfitting. Given the increased difficulty and cost of implementing more complex models, a method for comparing  the performance of competing stochastic models for particle dynamics is needed.

To this end, we propose a data-driven approach: we apply \textit{Bayesian model comparison} (BMC)~\citep{jaynes_2003, KassRaftery, MacKay}, which assigns probabilities to competing models based on their ability to explain observed data. We focus on the comparison between the Brownian and Langevin models for particles in two-dimensional homogeneous isotropic turbulence, with  data that consists of sequences of particle positions obtained from simulated Lagrangian trajectories. While this set-up is highly idealised, the methodology developed is applicable to more complex flows and models of particle dynamics.

 Model comparison is complicated by two issues: (i) proposed models typically contain a number of parameters whose values are uncertain, and
 (ii) a measure of model suitability is required, balancing accuracy and complexity.
 The natural language for this problem is then that of decision theory (see e.g.\ \cite{BernardoSmith} and \cite{Robert} for an overview of decision problems under uncertainty); however, several philosophical issues therein, \changes{such as the choice of utility function and its subjectivity, }can be avoided by adopting the ready-made approach of BMC.
 BMC and the related technique Bayesian model averaging are gaining popularity in many applied fields~\citep{MarkNature, Min2007BMA, Carson, Mann2011_animals}.
 In this paper, we demonstrate the potential of BMC by comparing the Brownian and Langevin models of dispersion in two-dimensional turbulence. This provides a simple illustration of the BMC methodology while addressing a  problem of interest:  dispersion in two-dimensional turbulence has received much attention as a paradigm for transport and mixing in stratified, planetary-scale geophysical flows~\citep{Provenzale1995}, and
can be modelled with stochastic processes~\citep[e.g.][]{Pasquero, lilly2017fractional}.

The paper is structured as follows. We introduce the Brownian and Langevin models in \S\ref{sec-2} and review the BMC method in \S\ref{Methods}. In \S\ref{Results} we show how this method can be applied to discrete particle trajectory data; we also show results of a test case, where the data are generated by the Langevin model itself. In \S\ref{NS2D} we apply BMC to data from direct numerical simulations of two-dimensional turbulence. In \S\ref{Conclusions} we give our conclusions on the method.

\section{Models and data}\label{sec-2}

\subsection{Brownian and Langevin models}\label{Models}

The models of interest are the \textit{Brownian model}, which for passive particles in homogeneous and isotropic turbulence is given by
\begin{align}\label{Brownian}
    \text{d}\bm{X} = \sqrt{2\kappa}\,\text{d}\bm{W},
\end{align}
with $\kappa>0$, and the \textit{Langevin model}, which, under the same conditions, is given by
\begin{subequations}\label{Langevin}
\begin{align}
\text{d}\bm{X} &= \bm{U}\,\text{d}t,\\
\text{d}\bm{U} &= -\gamma \bm{U}\, \text{d}t+\gamma\sqrt{2k}\,\text{d}\bm{W},
\end{align}
\end{subequations}
with $\gamma, k > 0$, and where, in both cases, $\bm{W}$ is a vector composed of independent Brownian motions. We denote the models by $\mathcal{M}_B(\kappa)$ and $\mathcal{M}_L(\gamma,k)$.

 We note  some important characteristics of the two models. The Brownian model involves particle position, $\bm{X}$, as its only component, which evolves as a scaled $d$-dimensional Brownian motion, where $d$ is the number of spatial dimensions. This implies that particle velocity evolves as a white noise process. The model has one parameter, the diffusivity $\kappa$.
The validity of~\eqref{Brownian} is typically justified by arguments involving strong assumptions of scale separation between mean flows and small-scale fluctuations which rarely hold in applications~\citep{Majda_Kramer, Berloffii}.

The Langevin model, by contrast, involves two components, particle position and particle velocity, $(\bm{X},\bm{U})$. The velocity component evolves according to a mean-zero Ornstein--Uhlenbeck process, and position results from time integration of this velocity. The model has two parameters, $\gamma$ and $k$, where $\gamma^{-1}$ is a Lagrangian velocity decorrelation time and $k$ characterises the strength of Gaussian velocity fluctuations. The Brownian and Langevin models are the first two members of a hierarchy of Markovian models involving an increasing number of time derivatives of the position~\citep{Berloffii}. 

In practice, the Brownian model is favoured over the Langevin model for its simplicity as well as for the practical virtue of having a smaller, more-easily-explored, one-dimensional parameter space. Note that if these models are to be implemented in the limit of continuous concentrations of particles then it is their corresponding Fokker--Planck equations which must be solved -- this means solving partial differential equations in $d+1$ or $2d+1$ dimensions, respectively.

Both the Brownian and Langevin model can be extended to account for spatial anisotropy, inhomogeneity and the presence of a mean flow, at the cost of increasing the dimension of their parameter spaces; full details are given in~\citet{Berloffii}. Brownian and Langevin dynamics underlie the so-called random displacement and random flight models used for dispersion in the atmospheric boundary layer~\citep{Esler2017}, \changes{and have been applied to the simulation of ocean transport, as models of mixing in the horizontal~\citep{Berloffii}, vertical~\citep{Onink}, and on neutral surfaces~\citep{Reijnders}}.
\citet{Ying} showed how Bayesian parameter inference can be applied to the Brownian model in the inhomogeneous setting using Lagrangian trajectory data. We restrict attention to isotropic
turbulence in this work for simplicity, noting that the methods demonstrated below are equally applicable in the more general case.

\subsection{Data}
For our comparison we consider trajectory data of the form
\begin{align}\label{dataX}
\left\{\left(\bm{X}^{(p)}_0, \cdots, \bm{X}^{(p)}_{N_{\tau}}\right): p\in\left\{1,\cdots, N_p\right\}\right\},
\end{align}
where $\bm{X}^{(p)}_n$ is the position of particle $p$ at time $t=n\tau$. In words, we observe the positions of a set of $N_p$ particles at discrete time intervals of length $\tau$, which we refer to as the sampling time. The performance of the models depends crucially on $\tau$. Since both models are uncorrelated in space, we can rewrite the observations as the set of displacements
\begin{align}\label{dataDX}
\Delta\mathcal{X}_{\tau} &=\left\{\left(\bm{\Delta X}^{(p)}_0, \cdots, \bm{\Delta X}^{(p)}_{N_{\tau}-1}\right): p\in\left\{1,\cdots, N_p\right\}\right\},
\end{align}
where $\Delta\bm{X}^{(p)}_n = \bm{X}^{(p)}_{n+1} - \bm{X}^{(p)}_n$.

In \S\ref{Results} we consider the case that the trajectory data are generated by Langevin dynamics, while in \S\ref{NS2D} we compare the Brownian and Langevin models given data from direct numerical simulations of a forced-dissipative model of stationary, isotropic two-dimensional turbulence.

\section{Methods}\label{Methods}
In this work we appeal to the Bayesian interpretation of probability and statistics. This means that probabilities reflect levels of plausibility in light of all available information. In particular, we  deal with uncertainty in  both the parameters of each model and the models themselves by assigning probabilities to them. We outline this procedure in \S\S\ref{Parameter_inf} and~\ref{Model_inf}.

\subsection{Parameter inference}\label{Parameter_inf}

The goal of parameter inference is to infer the values of the parameters $\bm{\theta}\in\Theta$
of a statistical model, say $\mathcal{M}(\bm{\theta})$, given observational data $\mathcal{D}$.
A model is characterised completely by its likelihood function $p(\cdot|\mathcal{M}(\bm{\theta}))$ which denotes the probability (density) of observations
under $\mathcal{M}(\bm{\theta})$.
Bayesian inference requires the specification of one's belief prior to observations through a prior distribution $p(\bm{\theta}|\mathcal{M})$. One can then invoke Bayes' Theorem,~\eqref{Bayes_thm}, to update this belief in light of the observations. This results in a posterior distribution
\begin{equation}\label{Bayes_thm}
    \overbrace{p(\bm{\theta}|\mathcal{D},\mathcal{M})}^{\text{Posterior}} = \frac{\overbrace{p(\mathcal{D}|\mathcal{M}(\bm{\theta}))}^{\text{Likelihood}}\, \overbrace{p(\bm{\theta}|\mathcal{M})}^{\text{Prior}}}{\underbrace{p(\mathcal{D}|\mathcal{M})}_{\text{Evidence}}},
\end{equation}
which denotes the probability (density) of each $\bm{\theta}\in\Theta$ given observations and prior knowledge~\citep{Jeffreys}. The posterior fully describes the uncertainty in the inferred parameters, in our case $\bm{\theta}= \kappa$ or $\bm{\theta}=(k,\gamma)$. In applications where point estimates of the parameters are required, these can be taken as e.g. the mean or mode of the posterior.

\subsection{Model inference}\label{Model_inf}
Beyond parameter inference we can also make inferences when the model itself, $\mathcal{M}$, is considered unknown. However, in order to meaningfully assign probabilities to models we must assume that the set of models under consideration, $M=\{\mathcal{M}_i\}_{i=1}^{N_M}$, includes all plausibly true models. That is, for any $\mathcal{M}^*\not\in M$, $p(\mathcal{M}^*)=0$. This is known as the $\mathcal{M}$-closed regime (see Chapter 6 of~\cite{BernardoSmith} or~\cite{ClydecIversen}). In situations where all models under consideration are known to be false this assumption appears dubious; however, we note that the same fallacy is committed in Bayesian parameter inference when we assign probabilities to the parameters of a parametric model which we know is imperfect, i.e. false. In the $\mathcal{M}$-closed regime one assigns prior probabilities to models such that $\sum_{i=1}^{N_m}p(\mathcal{M}_i) = 1$. This allows us to again invoke Bayes' Theorem in the form
\begin{align}
    p(\mathcal{M}|\mathcal{D}) = \frac{p(\mathcal{D}|\mathcal{M})\,p(\mathcal{M})}{p(\mathcal{D})}.
\end{align}
If $\mathcal{M}_i$ is parametric with parameters $\bm{\theta}_i\in\Theta_i$, $p(\mathcal{D}|\mathcal{M}_i)$ is given by
\begin{align}\label{evidence}
    p(\mathcal{D}|\mathcal{M}_i) = \int_{\Theta_i} p(\mathcal{D}|\mathcal{M}_i(\bm{\theta}_i))\, p(\bm{\theta}_i|\mathcal{M}_i)\, \text{d}\bm{\theta}_i,
\end{align}
which is known as the model evidence (or marginal likelihood, or model likelihood) of $\mathcal{M}_i$.

An important property of the evidence is that it accounts for parameter uncertainty. Considering the likelihood as a score of model performance given some fixed parameter values, the evidence can be viewed as an expectation of that score with respect to the prior measure on parameters. In this way the evidence favours models where observations are highly probable for the range of parameter values considered plausible a priori. In particular, this means that a model with many parameters which achieves a very high value of the likelihood only for a narrow range of parameter values which could not be predicted a priori is not likely to attain a higher value of the evidence than a model with fewer parameters whose values are better constrained by prior information. This apparent penalty is usually quantified by the so-called \textit{Occam (or Ockham) factor}, named in reference to Occam's razor,
\begin{align}\label{Occam}
    \mathrm{Occam}_i = p(\mathcal{D}|\mathcal{M}_i) / p(\mathcal{D}|\mathcal{M}_i(\bm{\theta}_i^*)) \in [0,1],
\end{align}
where $\bm{\theta}_i^*$ is the posterior mode of $\bm{\theta}_i$~\citep{jaynes_2003, MacKay}.

Given two models, $\{\mathcal{M}_0,\mathcal{M}_1\}$, a test statistic for the hypotheses
\[
\left\{
\begin{array}{ll}
    \mathcal{H}_0:& \mathcal{M}_0\text{ is the true model},\\
    \mathcal{H}_1:& \mathcal{M}_1\text{ is the true model},
\end{array}
    \right.
\]
is given by the Bayes factor~\citep{KassRaftery},
\begin{align}
K_{1,0} = \frac{p(\mathcal{D}|\mathcal{M}_1)}{p(\mathcal{D}|\mathcal{M}_0)},
\end{align}
where a large value of $K_{1,0}$ represents statistical evidence against $\mathcal{H}_0$.

The log-evidence is exactly equal to the log score~\citep{GneitingRaftery}, also known as the ignorance score~\citep{Bernardo1979, BroeckerSmith}, for probabilistic forecasts. Therefore, the log Bayes factor can be understood as a difference of scores for probabilistic models. Merits of the log score have been appreciated since at least the 1950s~\citep{Good}, including its intimate connection with information theory~\citep{RoulstonSmith, Du}. This interpretation of the Bayes factor does not rely on the assumption of the $\mathcal{M}$-closed regime.
In what follows we use the Bayes factor to compare the Brownian and Langevin models.

A useful approximation for the evidence \eqref{evidence} is given by Laplace's method: a Gaussian approximation of the unnormalised
posterior, $p_u(\bm{\theta})=p(\mathcal{D}|\mathcal{M}(\bm{\theta}))\, p(\bm{\theta}|\mathcal{M})$, is obtained from a quadratic expansion of $\ln p_u$ about the posterior mode, $\bm{\theta}^*$,
\begin{align}\label{ln pu}
    \ln\left(p_u(\bm{\theta})\right) \approx \ln\left(p_u(\bm{\theta}^*)\right) - \frac{1}{2}\left(\bm{\theta}-\bm{\theta}^*\right)^{\mathrm{T}} \mathsfbi{J} \left(\bm{\theta}-\bm{\theta}^*\right),
\end{align}
where
\begin{align}\label{info}
    \mathsfbi{J}_{ij} = -\frac{\partial^2}{\partial \bm{\theta}_i\partial \bm{\theta}_j}\ln p_u(\bm{\theta})\bigg\rvert_{\bm{\theta}=\bm{\theta}^*}.
\end{align}
Taking an exponential of~\eqref{ln pu} we  recognise that we have approximated $p_u(\bm{\theta})$ with the probability density function (up to a known normalisation) of a Gaussian random variable with mean $\bm{\theta}^*$ and covariance $\mathsfbi{J}^{-1}$, so \eqref{evidence} becomes
\begin{align}\label{Laplace}
  \underbrace{p(\mathcal{D}|\mathcal{M}_i)}_{\text{Evidence}} \approx \underbrace{p(\mathcal{D}|\mathcal{M}_i(\bm{\theta}_i^*))}_{\text{Maximum likelihood}} \times \hspace{.2cm} \underbrace{p(\bm{\theta}_i^*| \mathcal{M}_i)\left(\text{det}(\mathsfbi{J}/2\pi)\right)^{-\frac{1}{2}}}_{\text{Occam factor}}.
\end{align}
This approximation is accurate for a large number of data points $N_p \times N_\tau$ where a Bernstein--von Mises theorem can be shown to hold, guaranteeing asymptotic normality of the posterior measure~\citep{vanderVaart}.

We highlight that a model's evidence is sensitive to the prior distribution on the parameters, $p(\bm{\theta}|\mathcal{M})$. This is entirely in the spirit of Bayesian statistics in that a parametric model accompanied with the prior uncertainty on its parameters constitutes a single, complete hypothesis for explaining observations. The evidence for a model is less when the mass of prior probability on parameters is less concentrated on those values for which the likelihood is largest.

\section{Results}\label{Results}

In this section we provide details on how BMC can be performed for the Brownian and Langevin model and consider data generated by the Langevin model. We derive the likelihood function for each model, discuss prior distributions for parameters, and the practicalities of inference calculations.

Before we compute the Bayes factor for the Langevin and Brownian models $\mathcal{M}_L$ and $\mathcal{M}_B$, we infer the parameters of both models using a range of datasets with varying sampling time, $\tau$, to establish when each model is \textit{sampling-time consistent} -- we say a model is sampling-time consistent when inferred parameter values are  stable over a range of $\tau$. We emphasise that sampling-time consistency does not imply a model is good, but is certainly a desirable property when one wishes to use a model for extrapolation, e.g. for unobserved values of $\tau$.

Justifications for the Brownian model apply formally only in the large-time limit; we are, therefore, interested in establishing a minimum timescale for the sampling-time consistency of the Brownian model, and further establishing whether the Langevin model, given that it includes time correlation, is sampling-time consistent on shorter timescales. 

Note that in the large-time limit, that is, for $t \gg \gamma^{-1}$, the Langevin dynamics  are asymptotically diffusive: for $\gamma \to \infty$, the
Langevin equations \eqref{Langevin} reduce to \citep{Pav}
\begin{align}\label{diffusive_limit}
    \text{d}\bm{X} = \sqrt{2k}\, \text{d}\bm{W}.
\end{align}
This fact is important when comparing the models, and we  return to it later.

\subsection{Likelihoods}
We can derive explicit expressions for the probability of data of the form of $\Delta \mathcal{X}_{\tau}$ under $\mathcal{M}_B(\kappa)$ and $\mathcal{M}_L(\gamma,k)$ by using their transition probabilities. 
The position increments for $\mathcal{M}_B(\kappa)$ satisfy
\begin{align}\label{Brownian_trans}
    \bm{X}(t+\tau)-\bm{X}(t)\sim \mathcal{N}\left(0,\,2\kappa\tau \mathbb{I}\right),
\end{align}
where $\mathcal{N}(\mu,\, \mathsfbi{C})$ is the $d$-dimensional Gaussian distribution with mean $\mu$ and covariance matrix $\mathsfbi{C}$, and $\mathbb{I}$ is the $d \times d$ identity matrix.
Further, distinct increments are independent under $\mathcal{M}_B(\kappa)$.
Therefore, the desired probability is
\begin{align}
    p\left(\Delta \mathcal{X}_{\tau}|\mathcal{M}_B(\kappa)\right) = \prod_{p=1}^{N_p}\prod_{n=0}^{N_{\tau}-1}\prod_{i=1}^d\rho_{\mathcal{N}}\left(\Delta X^{(p)}_{n,i};\, 0,\, 2\kappa\tau\right),
\end{align}
where $i$ indexes spatial dimension and $\rho_{\mathcal{N}}(\bm{x};\,\bm{\mu},\mathsfbi{C})$ is the probability density at $\bm{x}$ of
the Gaussian distribution $\mathcal{N}(\bm{\mu},\, \mathsfbi{C})$.

The corresponding likelihood for the Langevin model is shown in appendix~\ref{Lang_MargLike} to be
\begin{align}\label{Lang_Marg}
    p(\Delta \mathcal{X}_{\tau}|\mathcal{M}_L(\gamma, k)) &= \prod_{p=1}^{N_p}\prod_{i=1}^d \rho_{\mathcal{N}}\left((\Delta X_{0,i}^{(p)}, \cdots \Delta X_{N_{\tau}-1,i}^{(p)})^{\mathrm{T}};\,\bm{0} ,\, \mathsfbi{S}\right),
\end{align}
where $\mathsfbi{S}$ is the symmetric Toeplitz matrix with
\begin{align}
    \mathsfi{S}_{ij} = \left\{\begin{array}{cc}
        2k\tau(1-\varphi(\gamma \tau)) &  \text{if $m=0$}\\
        k\gamma\tau^2\varphi^2(\gamma\tau)\e^{-(m-1)\gamma\tau} & \text{if $m>0$}
    \end{array}\right. ,
\end{align}
\begin{align}
    \varphi(x) = \frac{1-\e^{-x}}{x},
\end{align}
and $m=|i-j|$.

\subsection{Prior distributions}\label{priors}
It is necessary, both for parameter and model inference, to specify a prior distribution for each of the parameters, $\kappa$, $\gamma,$ and $k$. For a given flow we can appeal to scaling considerations to assign a prior mean to each parameter, derived from characteristic scales. Once such prior means are prescribed, the maximum entropy principle, along with positivity and independence of the parameters
motivates a choice of corresponding exponential distributions as priors \citep{jaynes_2003,CoverThomas}. That is, for a parameter $\theta>0$ with prior mean $\mu$, the distribution with maximum  entropy
is the exponential distribution $\mathrm{Exp}(\lambda)$ with rate $\lambda=1/\mu$. 
We use this prescription for our choice of prior.

\subsection{Inference numerics}
The computations we perform for Bayesian parameter inference are: (i) an optimisation procedure to find the posterior mode, $\bm{\theta}^*$, and (ii) a single evaluation of the Hessian of the log-posterior distribution at $\bm{\theta}^*$, $-\mathsfbi{J}$ in \eqref{info}, which we can use to estimate the posterior variance by a Gaussian approximation as in~\eqref{ln pu}. We have analytical expressions for the likelihood and prior for both models, so we can easily evaluate the negative log unnormalised posterior, $f(\bm{\theta}) = -\ln p_u(\bm{\theta}|\mathcal{D})$, in each case; we find $\bm{\theta}^*$ by minimising $f(\bm{\theta})$ using the \texttt{SciPy} function \texttt{optimize.minimize()}.

In the case of the Brownian model derivatives of $f(\bm{\theta})$ are easily derived analytically, so we use the \texttt{L-BFGS-B} routine which exploits gradient information and allows for the specification of lower bound constraints to enforce positivity~\citep{L-BFGS-B}.
In the case of the Langevin model calculation of derivatives of the posterior is nontrivial because the likelihood~\eqref{Lang_Marg} is a complicated function. For this reason we use the gradient-free \texttt{Nelder-Mead}~\citep{Nelder-Mead} routine rather than \texttt{L-BFGS-B}. We evaluate $\mathsfbi{J}^{-1}$ approximately using a fourth-order central difference approximation for the log-likelihood.

No further computations are required for BMC if the Laplace's method approximation for the evidence in~\eqref{Laplace} is used.

\begin{figure}
    \centering
    \includegraphics[width=4in]{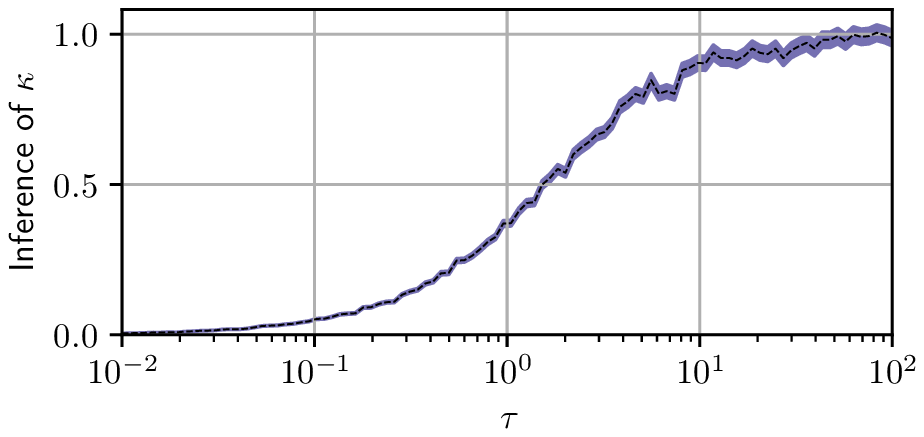}
    \includegraphics[width=4in]{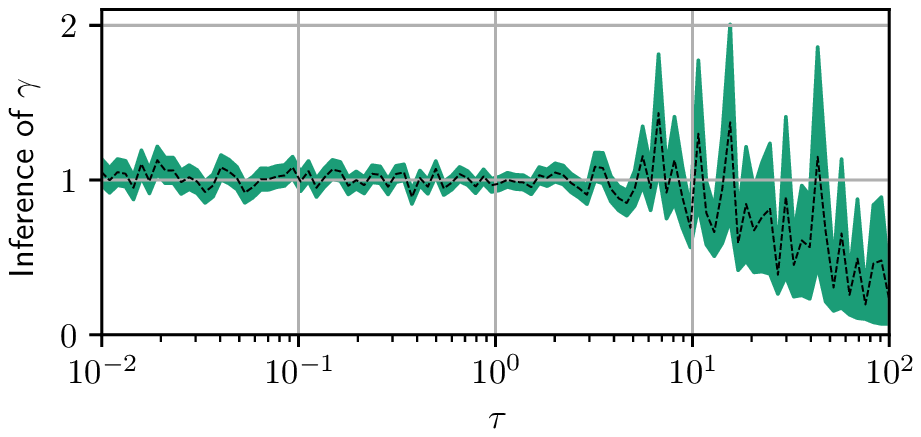}
    \includegraphics[width=4in]{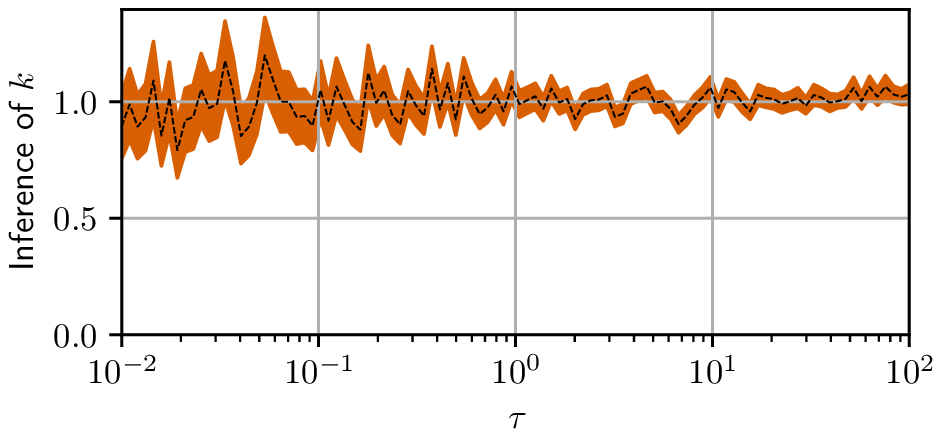}
    \caption{Parameter inference for the Brownian and Langevin models as a function of observation interval, $\tau$, for data from the Langevin model in three spatial dimensions. Dashed lines indicate posterior mode estimates, $\theta^*=\kappa$ (top), $\gamma^*$ (middle) and $k^*$ (bottom); shaded areas show $\theta^*\pm \text{SD}(\theta|\Delta\mathcal{X}_{\tau})$. Each inference is made with a fixed volume of data: $N_p=100$ and $N_{\tau}=10$.}
    \label{fig:params_Langevin}
\end{figure}

\begin{figure}
    \centering
    \includegraphics[width=4in]{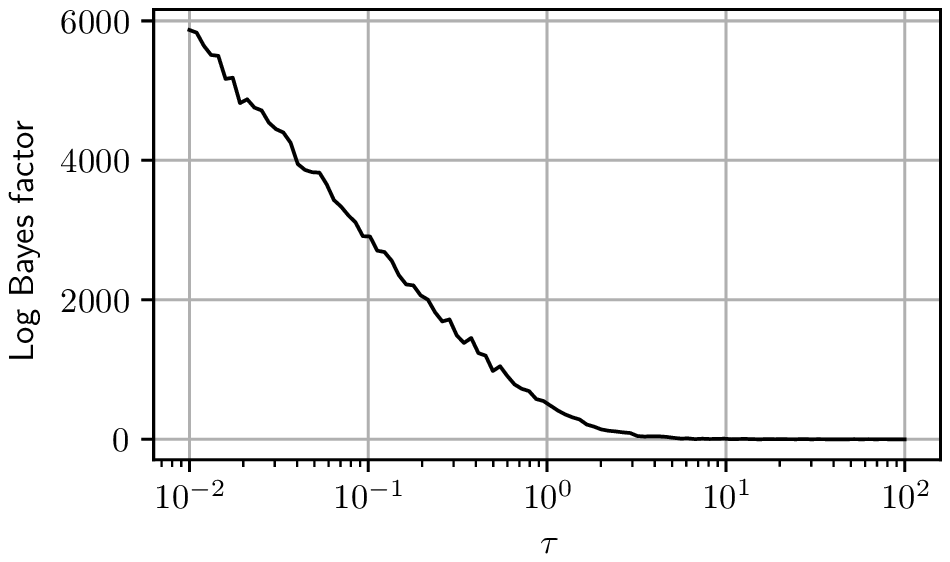}
    \includegraphics[width=4in]{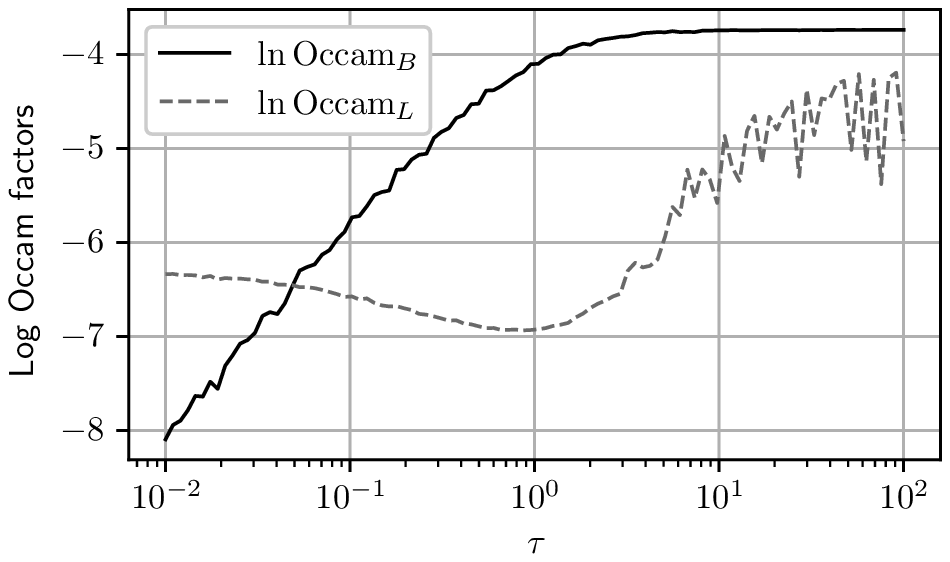}
    \caption{Log Bayes factors, $\ln K_{L,B}$, and corresponding log Occam factors, as a function of $\tau$, given the same data used for figure~\ref{fig:params_Langevin}.}
    \label{fig:BMC_Langevin}
\end{figure}

\subsection{Test case: Langevin data}\label{Testcase}
As a test case and to build intuition, we first consider  trajectory data generated by the Langevin model with $d=3$. In this case, one of the two candidate models  is the true model.
We  generate the data  by simulating the Langevin SDE~\eqref{Langevin} exactly, drawing initial velocities from the stationary distribution $\bm{U}|\mathcal{M}_L(\gamma, k)\sim\mathcal{N}(\bm{0},\gamma k\mathbb{I})$, and using the transition probabilities~\eqref{Langevin_trans}; velocity data are discarded to construct the dataset of position increments $\Delta\mathcal{X}_{\tau}$.

We set $\gamma = k = 1$, fix $N_p=100$ and $N_{\tau}=10$, and perform Bayesian parameter inference and model comparison with a series of independently generated datasets with $\tau\in[10^{-2},10^2]$. We set fixed priors $\gamma,k,\kappa\sim\text{Exp}(1)$.

Figure~\ref{fig:params_Langevin} shows the results of the parameter inference. Note that both Langevin parameters are well identified until, at sufficiently large $\tau$, the error of the posterior mode estimate of $\gamma$ grows along with the posterior standard deviation of $\gamma$. This is a manifestation of the diffusive limit \eqref{diffusive_limit} of the Langevin dynamics, wherein $\bm{X}(t+\Delta t)-\bm{X}(t)\sim\mathcal{N}(\bm{0}, 2k\Delta t\mathbb{I})$ is independent of $\gamma$. Unsurprisingly, then, $\Delta\mathcal{X}_{\tau}$ is less informative about $\gamma$ when $\tau$ is large.

The diffusivity $\kappa$ of the Brownian model is sampling-time consistent only when $\tau$ is sufficiently large, i.e. in the diffusive limit of the Langevin dynamics, when $\kappa\approx k$. The inaccuracy of posterior mode estimates of $\kappa$ at small $\tau$ is expected as it is known that the inference of diffusivities from discrete trajectory data is sensitive to sampling time~\citep{Cotter_Pavliotis2009}. We note that $\gamma^{-1}=1$ is the decorrelation time for this data so that the timescales at which this limiting behaviour is observed, $\tau\gtrsim 10$, are indeed large.

Note that the posterior mode estimates of $\gamma$ eventually decay to zero as $\tau$ increases; since, as observed, $\Delta\mathcal{X}_{\tau}$ becomes less informative about $\gamma$ with increasing $\tau$, the contribution of the prior information to the posterior becomes dominant over the contribution from the likelihood -- the consequence of this is that the posterior mode tends to the prior mode, which is zero since we take $\gamma\sim\text{Exp}(1)$.

Figure~\ref{fig:BMC_Langevin} shows the log Bayes factors found using  Laplace's method  for the evidences. We see that for a significant range of $\tau$ the Langevin model is preferred, indicated by large positive values of $\ln K_{L,B}$, but its dominance diminishes as $\tau$ increases until the diffusive limit is reached, at which point values of $|\ln K_{L,B}|< 1$ are typical, indicating insubstantial preference for either model.

Also shown in figure~\ref{fig:BMC_Langevin} are the corresponding log Occam factors. Occam factors for the Brownian model are approximately constant once $\tau$ is sufficiently large, while the Occam factors for the Langevin model increase at large $\tau$ in line with a broadening posterior. This is indicative of decreased sensitivity to choice of parameters, specifically $\gamma$, whose value becomes less critical for explaining dynamics on large timescales.

\section{Application to two-dimensional turbulence}\label{NS2D}
 In this section we report an application of BMC to particle trajectories in a 
 model of stationary, isotropic two-dimensional (2D) turbulence.

\subsection{Forced-dissipative model}
\label{F-D}

We consider a forced--dissipative model of isotropic 2D turbulence in an incompressible fluid governed by the vorticity equation~\citep{Vallis_2017}
\begin{align}\label{baro}
    \frac{\partial \zeta}{\partial t} + (\bm{u}\cdot\nabla)\zeta = F + D,
\end{align}
where $\zeta$ is the vertical vorticity and $F$ and $D$ represent forcing and dissipation, respectively. The particular forcing used is an additive homogeneous and isotropic white Gaussian noise concentrated in a specified range of  wavenumber centred about a forcing wavenumber, $k_F$. In particular, following~\citet{Scott}, we have that, at each timestep, the Fourier transform  of $F$ satisfies
\begin{align}
    \text{Re}\left(\hat{F}(\bm{k})\right) \stackrel{d}{=} \text{Im}\left(\hat{F}(\bm{k})\right)\sim \mathcal{N}
    \left(0, \frac{A_{F}\mathcal{F}_F(|\bm{k}|)}{2\pi|\bm{k}|}\right),
\end{align}
where $A_F$ is the forcing amplitude, and  $\mathcal{F}_F(|\bm{k}|) = 1$ for $\left|\left|\bm{k}\right|-k_F\right|\leq2$ and $\mathcal{F}_F= 0$ otherwise.

Two dissipation mechanisms are included: (i) small-scale dissipation implemented with a scale-selective exponential cut-off filter (see~\cite{ArbicFlierl} for details and justification), and (ii) large-scale friction (aka hypodiffusion), so that total dissipation is given by
\begin{align}
    D = A_{\mathrm{lsf}}\psi + \mathrm{ssd},
\end{align}
where ssd denotes the small-scale dissipation.

Equation~\eqref{baro} is solved in a periodic domain, $[0,2\pi]^2$, using a standard pseudospectral solver, at a resolution of $1024\times1024$ gridpoints, with the third-order Adams--Bashforth timestepping scheme. The complete set of flow configuration parameter values for our simulations are given in Table~\ref{tab:flow_params}.

\begin{table}
  \begin{center}
    domain $= [0,2\pi]^2$; \quad resolution = 1024 $\times$ 1024 grid points; \quad  timestep size $=2.5\times10^{-4}$;\\
    $k_F=64$; \quad $A_F=8.9\times10^8$; \quad $A_{\text{lsf}}=1$.
  \caption{Flow configuration parameter values for simulations of the 2D turbulence model.}
  \label{tab:flow_params}
  \end{center}
\end{table}

The model is initialised with a random, Gaussian field with prescribed mean energy spectrum and is run until the total energy,
\begin{align}
    E(t) := \frac{1}{2} \int |\bm{u}(\bm{x},t)|^2\, \text{d}x\,\text{d}y,
\end{align}
appears to reach a statistically stationary state; this amounted to a spin-up time of approximately $6800$ eddy turnover times, where the eddy turnover time is estimated by 
\begin{equation}
    \tau_{\zeta} = 2\pi/\sqrt{Z},
\end{equation}
and $Z$ is the total enstrophy,
\begin{align}
    Z := \frac{1}{2} \int \zeta^2\, \text{d}x\,\text{d}y.
\end{align}

Figure~\ref{fig:z} shows a snapshot of the vorticity field at the end of the spin-up process. Enstrophy is concentrated in a population of coherent vortices whose scale is set by the forcing scale, $k_F^{-1}$. Figure~\ref{fig:ke_spec} shows the isotropic energy spectrum calculated at the same instant. A power law of approximately $k^{-2}$ is observed at wavenumbers between the peak wavenumber at $k\approx 6$ and the forcing wavenumber at $k\approx 64$ (indicated with a vertical line). A second power law of approximately $k^{-3.5}$ is seen at wavenumbers between the forcing scale and the dissipation range. Large-scale friction prevents the indefinite accumulation of energy at the largest scales, while continued forcing prevents energy from concentrating exclusively around a peak wavenumber at late times, and, by inputting enstrophy at a moderate scale, sustains a lively population of vortices.



\begin{figure}
    \centering
    \includegraphics[width=5.0in]{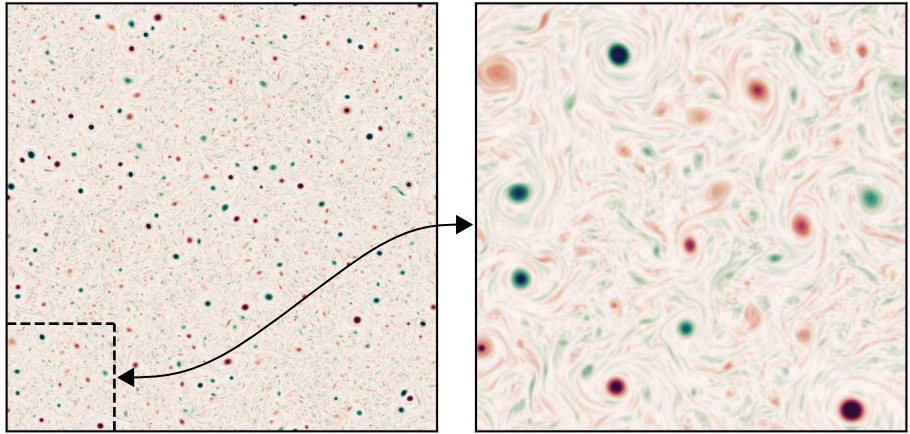}
    \caption{Snapshot of the vorticity field in the forced-dissipative model at stationarity showing $x,y\in[0,2\pi]$ (left) and $x,y\in[0,\pi/2]$ (right).}
    \label{fig:z}
\end{figure}

\begin{figure}
    \centering
    \includegraphics[width=3in]{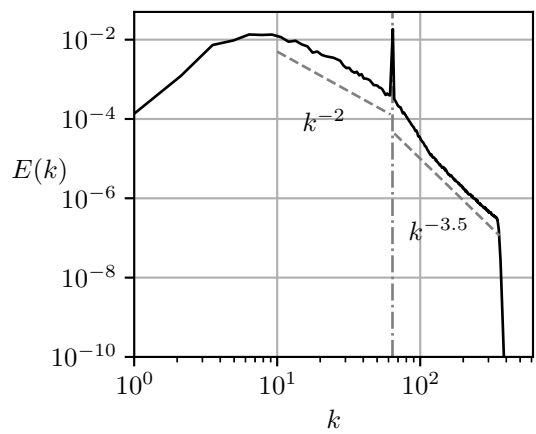}
    \caption{Snapshot of the isotropic energy spectrum in the forced--dissipative model at stationarity. }
    \label{fig:ke_spec}
\end{figure}

\subsection{Particle numerics}
After spin-up, a set of $1000$ passive tracer particles are evolved in the flow of the forced-dissipative model for approximately another $6800$ eddy turnover times; this is done using bilinear interpolation of the velocity field and the fourth-order Runge--Kutta time-stepping scheme. Particles are seeded at initial positions chosen uniformly at random in the domain.

Figure~\ref{fig:traj} shows a subset of the trajectory data generated. Some particles follow highly oscillatory paths, while others do not, depending on whether they are seeded in the interior of a coherent vortex or in the background turbulence.

\begin{figure}
    \centering
    \includegraphics[height=3in]{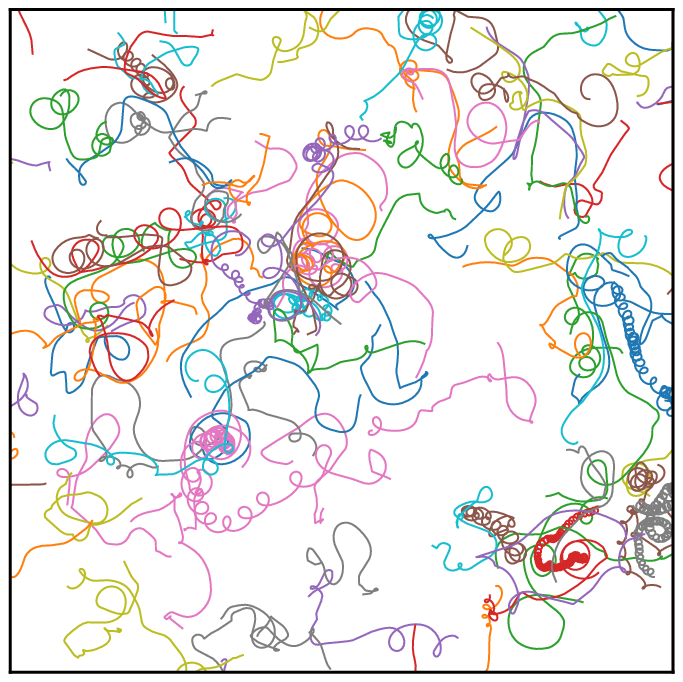}
    \caption{Trajectories of $100$ passive particles advected in the the forced--dissipative model, shown as recorded over a period of $100\tau_{\zeta}$ with a different colour for each trajectory.}
    \label{fig:traj}
\end{figure}

\subsection{Diagnostics}
\changes{To illustrate the dynamics that we parameterise with the stochastic models we show two diagnostics commonly used in Lagrangian analyses~\citep{Pasquero, vanSebille}, namely, the Lagrangian velocity autocovariance function (LVAF), defined in the isotropic case as}
\begin{align}
    r(\tau) = \langle U^{(p)}(t+\tau) U^{(p)}(t) \rangle,
\end{align}
\changes{where the angled-brackets denote the average over $t$ and particles $p$,
and the absolute diffusivity}
\begin{align}
    \kappa_{\mathrm{abs}}(\tau) = \frac{\left\langle\left(\Delta X^{(p)}(\tau)\right)^2\right\rangle}{2\tau}.
\end{align}

\changes{Figure~\ref{fig:r} shows the LVAF as estimated from the simulated particle trajectory data. The corresponding LVAF of the Brownian model is a delta function at zero, since velocity is implicitly represented as a white-noise process, while the LVAF of the Langevin model, which represents particle velocity as an Ornstein--Uhlenbeck process, is}
\begin{align}
    r_{\text{OU}}(\tau; \gamma, k) = k\gamma\exp(-\gamma|\tau|).
\end{align}
\changes{In contrast, the estimated LVAF of the forced--dissipative model not only shows finite decorrelation time but is noticeably sub-exponential.}

\changes{Figure~\ref{fig:diff} shows the estimated absolute diffusivity. In line with the asymptotic laws described in~\citet{Taylor} the absolute diffusivity is linear at small $\tau$ and constant at large $\tau$, corresponding to the ballistic and diffusive regimes, respectively. The absolute diffusivity of the Brownian model is constant, while that of the Langevin model is}
\begin{align}
    \kappa_{\text{OU}}(\tau) = k\left(1-\varphi(\gamma\tau)\right),
\end{align}
\changes{which is linear at small $\tau$ and constant at large $\tau$.}

\changes{Qualitatively, from comparing these diagnostics with those of the stochastic models it is clear that on sufficiently large times (in the diffusive regime) the Brownian model is valid; in particular, the LVAF is well-approximated by a delta function at large-times, and correspondingly the absolute diffusivity is constant. On timescales shorter than the diffusive regime the LVAF of the observed trajectories may be better approximated by that of the Langevin model; however, the quality of this approximation is in general unclear a priori. It could be tempting to estimate $\gamma$ by fitting the LVAF, using e.g. a least-squares method, but this approach would not correctly deal with uncertainty in parameters.
}


\begin{figure}
    \centering
    \includegraphics{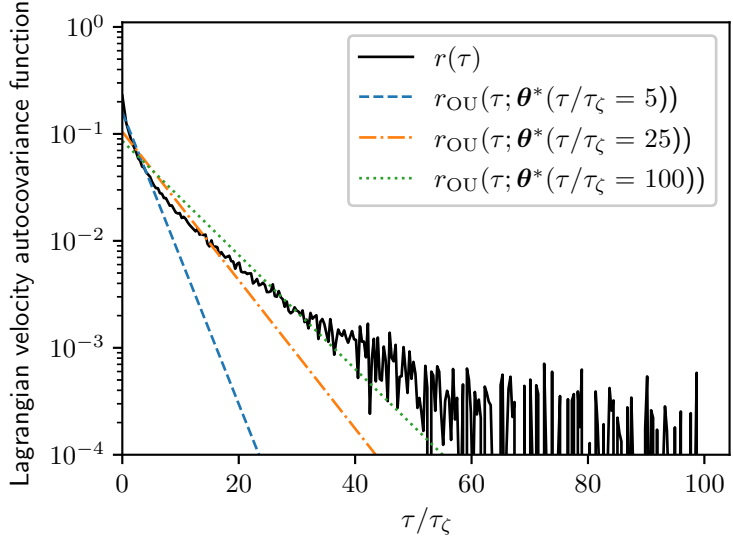}
    \caption{\changes{LVAF $r(\tau)$ for the forced--dissipative model, as estimated from the full set of $1000$ simulated particle trajectories. The LVAF of the Langevin model $r_{\mathrm{OU}}(\tau)$ is also shown using MAP estimates (discussed below) $\bm{\theta}^*=(\gamma^*, k^*)$ derived from datasets with $\tau = (5, 25, 100) \tau_{\zeta}$, respectively (see figure~\ref{fig:params_NS2D}).}}
    \label{fig:r}
\end{figure}
\begin{figure}
    \centering
    \includegraphics{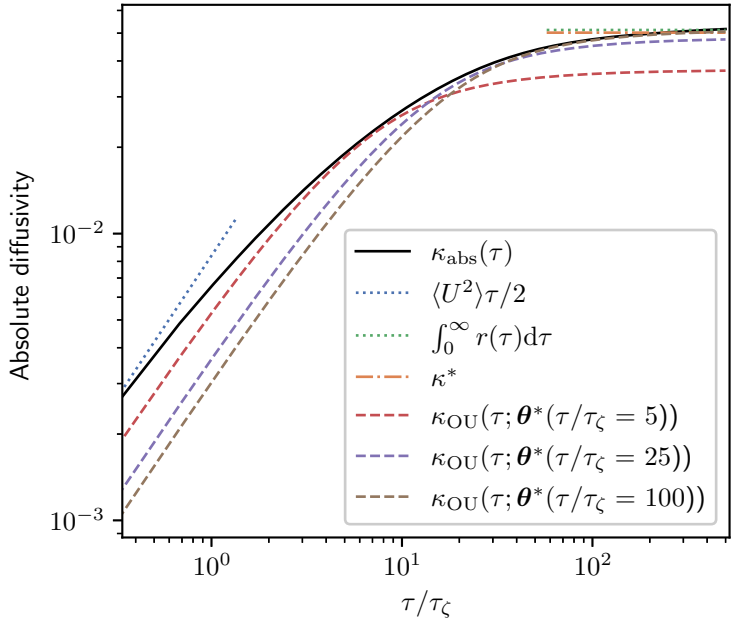}
    \caption{\changes{Absolute diffusivity, $\kappa_{\mathrm{abs}}(\tau)$, for the forced--dissipative model, as estimated from the full set of $1000$ simulated particle trajectories. A MAP estimate $\kappa^*$ is shown, along with two asymptotic laws: $\kappa_{\mathrm{abs}}(\tau)= \mathrm{linear}$ (ballistic regime), and $\kappa_{\mathrm{abs}}(\tau)= \mathrm{const.}$ (diffusive regime).}}
    \label{fig:diff}
\end{figure}

\subsection{Parameter inference and BMC}\label{NS2D_inference}
We now apply the parameter inference and BMC procedures demonstrated in the test case of \S\ref{Testcase}. By subsampling the results of our particle simulations we generate datasets with $N_p=1000$, $N_{\tau}=25$, and a set of sampling times $\tau$ in the range $[\tau_{\zeta}, 250\tau_{\zeta}]$.

Prior means for the parameters are derived from $\tau_{\zeta}$ and the root-mean-square velocity $u_{\text{RMS}}=\sqrt{2E}$, where $E$ is the mean energy: as discussed in \S\ref{priors} we take
\begin{subequations}\label{means}
\begin{align}
    \mathbb{E}[\kappa] = \mathbb{E}[k] &= u_{\text{RMS}}^2 \tau_{\zeta},\\
   \mathbb{E}[\gamma] &= \tau_{\zeta}^{-1},
\end{align}
\end{subequations}
and use the corresponding exponential distributions as priors.

The results of the parameter inference are shown in figure~\ref{fig:params_NS2D}. The Brownian model is sampling-time consistent for $\tau \gtrsim 150 \tau_{\zeta}$, with a posterior mode that differs by  \changes{$40\%$} from the scaling estimate used as prior mean. The long time required for sampling-time consistency is in line with the expected validity of the Brownian model in the long-time limit. \changes{In this limit $\kappa^*$ agrees very well with \citeauthor{Taylor}'s \citeyear{Taylor} theoretical prediction, $\kappa=\int_0^\infty r(\tau)\text{d}\tau$ (see figure~\ref{fig:diff}).}

\changes{The Langevin model is \changes{roughly} sampling-time consistent from much smaller values of $\tau$, say $\tau \gtrsim 50\tau_{\zeta}$. This suggests that there is a range of sampling times, roughly {$50 \tau_{\zeta} \lesssim \tau \lesssim 150 \tau_{\zeta}$}, where the Langevin model is potentially useful but the Brownian model is not. The BMC analysis below sheds further light on this. However, there is noticeable decay in the MAP estimates of $\gamma$ with increasing $\tau$ --  this is likely a reflection of the sub-exponential nature of the true LVAF. In figure~\ref{fig:r} we plot the Langevin LVAF given parameters inferred with data of various $\tau$, where the decay in estimates of $\gamma$ corresponds to a shallowing of the Langevin LVAF. In figure~\ref{fig:diff} we plot the absolute diffusivity of the Langiven model $\kappa_{\text{OU}}$ using the same parameter estimates as in figure~\ref{fig:r}. The absolute diffusivity is best approximated at a timescale matching the sampling time of the data.} The posterior mode of $\gamma$, when roughly stable, is almost an order of magnitude smaller than the scaling estimate in  \eqref{means}, indicating that particle dynamics decorrelate slower than might be predicted by a naive dimensional analysis based on the enstrophy alone. In particular, the inferred value of $\gamma$ corresponds to a decorrelation time of about $8$ eddy turnover times. As in the test case of \S\ref{Testcase} the posterior standard deviation of $\gamma$ grows with $\tau$ as the diffusive limit is reached and the particle dynamics become insensitive to $\gamma$. It is interesting to note that the Langevin diffusivity $k$ is estimated consistently for sampling times much shorter than those required to estimate the Brownian diffusivity $\kappa$ even though their values are identical when the Brownian model represents the dispersion well. This suggests that carrying out Bayesian inference of the Langevin model might provide a means to estimate the Brownian diffusivity when data is not available over the long, diffusive time scales that are required a priori. \changes{This may generalise to other flows only when the Langevin model is a reasonable approximation -- inference of $k$ is unlikely to be sampling time consistent if inference of $\gamma$ is not, for example, due to the LVAF of the flow of interest being very far from exponential.} We emphasise that the inference results just described are largely insensitive to specification of the prior.

\begin{figure}
    \centering
    \includegraphics[width=4in]{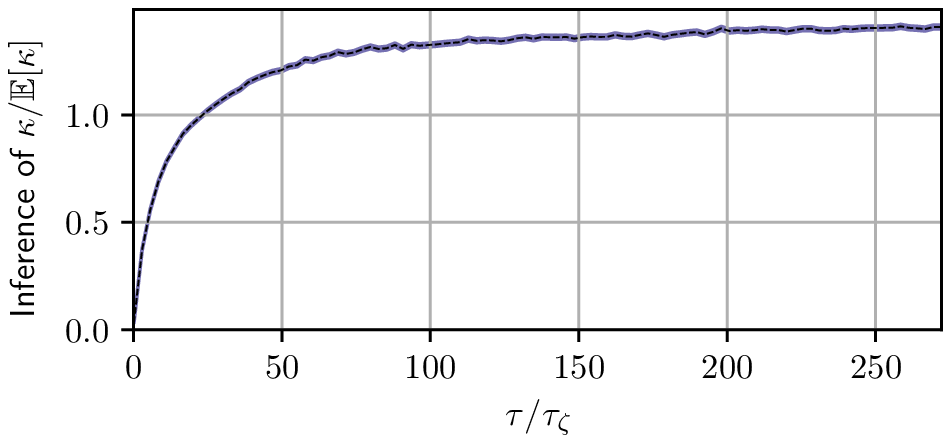}
    \includegraphics[width=4in]{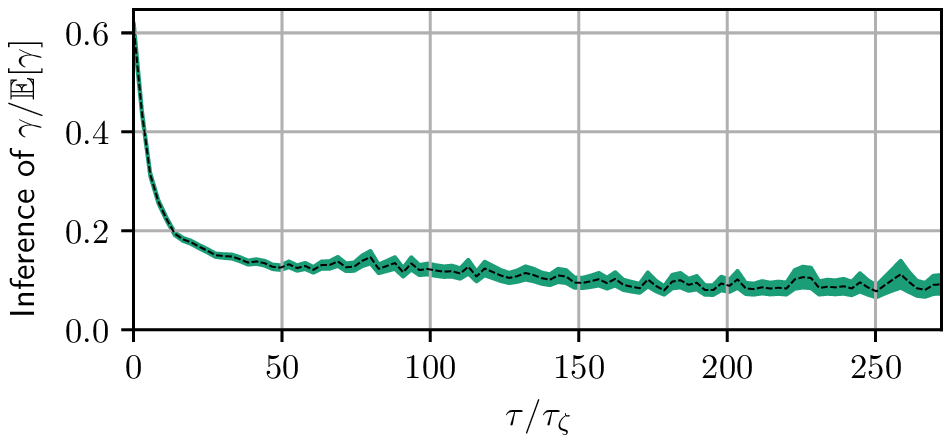}
    \includegraphics[width=4in]{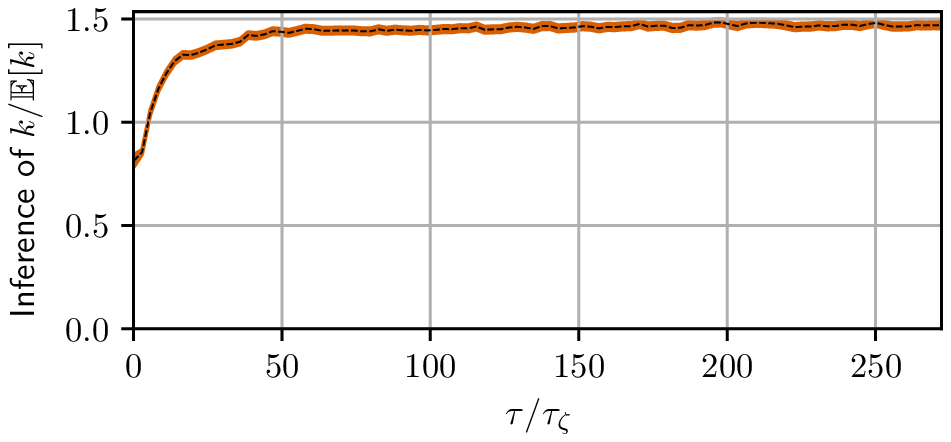}
    \caption{Parameter inference for the Brownian and Langevin models as a function of observation interval, $\tau$, for data from the two-dimensional turbulence model. Dashed lines indicate posterior mode estimates, $\theta^*$, normalised with respect to prior means, and shaded areas are $\theta^*\pm \text{SD}(\theta|\Delta\mathcal{X}_{\tau})$. Each inference is made with a fixed volume of data: $N_p=1000$ and $N_{\tau}=25$.}
    \label{fig:params_NS2D}
\end{figure}

\begin{figure}
    \centering
    \includegraphics[width=4in]{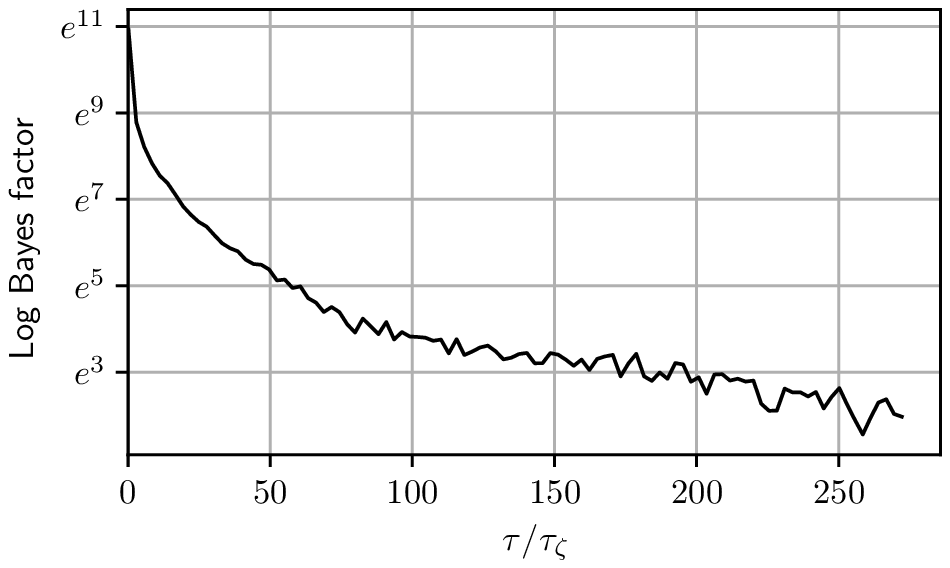}
    \includegraphics[width=4in]{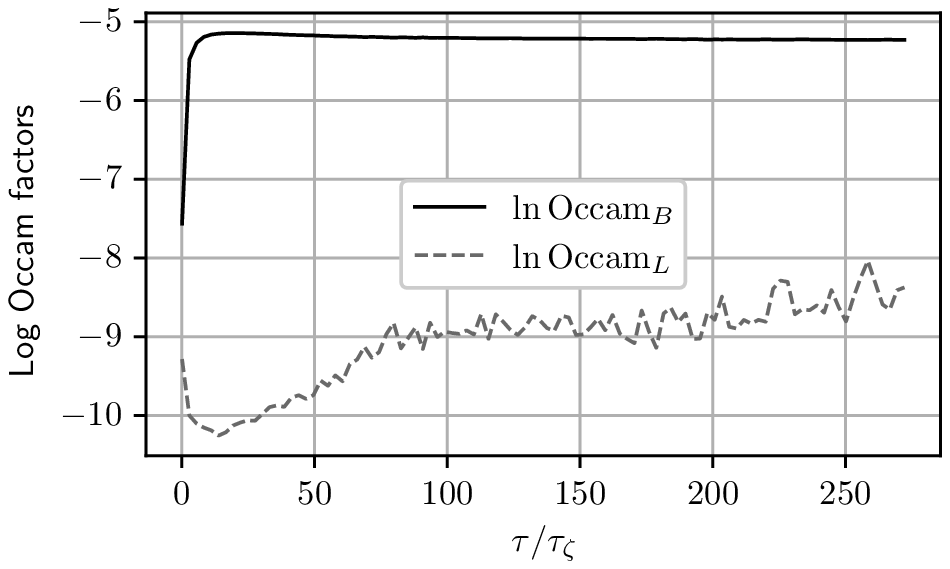}
    \caption{Log Bayes factors, $\ln K_{L,B}$, and corresponding log Occam factors, as a function of $\tau$, given the same data used for figure~\ref{fig:params_NS2D}.}
    \label{fig:BMC_NS2D}
\end{figure}

The results of the BMC for the turbulence model data are shown in figure~\ref{fig:BMC_NS2D}. The picture is similar to that in the test case of \S\ref{Testcase}, in that the Bayes factor favours the Langevin model for shorter timescales, but with diminishing strength as $\tau$ is increased, until, at timescales corresponding to convergence of the Brownian diffusivity, the value of the log Bayes factor becomes small enough that the two models cannot be meaningfully discriminated.

Also shown in figure~\ref{fig:BMC_NS2D} are the corresponding log Occam factors. For $\tau$ large enough that the Brownian model is sampling-time consistent, its Occam factor is approximately constant and larger than that of the Langevin model. As in the test case in \S\ref{Testcase}, the Occam factor for the Langevin model increases towards that of the Brownian model at large $\tau$ when the particle dynamics are sufficiently decorrelated that the likelihood is less sensitive to the value of $\gamma$, \changes{albeit more slowly, owing to the more slowly decaying LVAF of the turbulent dynamics}. The difference in log Occam factors is much smaller than the difference in the corresponding maximum log-likelihoods for all but the largest values of $\tau$, which explains why the Bayes factor mainly favours the Langevin model. 

In summary, these results indicate that while the Brownian model is adequate on sufficiently large timescales($\tau\gtrsim150\tau_{\zeta}$), the Langevin model can explain better the dynamics of tracer particles in the turbulence model of \S\ref{F-D} on shorter timescales ($50\tau_{\zeta}\lesssim\tau\lesssim150\tau_{\zeta}$). On time scales $\tau\gtrsim150\tau_{\zeta}$ the two models are indistinguishable in their performance, so that in this regime the Brownian model should be favoured in practice as a more parsimonious description.

\section{Conclusions}\label{Conclusions}
We have demonstrated the application of BMC to a problem of interest in fluid dynamics, and shown that we can compare the performance of competing stochastic models of particle dynamics given discrete trajectory data alone while accounting for parameter uncertainty. In particular, we found that the Langevin model is preferred over the Brownian model for describing particle dynamics in a model of two-dimensional turbulence on a range of timescales, but that on sufficiently large timescales the two models perform equally well.

The broad conclusion of the BMC, then, is that the additional complexity of the Langevin model, associated with the presence of an additional parameter, is justified: its better capability to explain the data, as quantified by the maximum likelihood, overwhelms the penalty for complexity quantified by the Occam factor. We stress, however, that this conclusion does not take into account the computational cost involved if the models are used for predictions.

The application of the BMC method to other problems is limited by the feasibility of the calculation of the model evidence. Specifically, BMC inherits the usual challenges of Bayesian and likelihood-based inference procedures, namely that the likelihood can be intractable or expensive to compute for complex models -- the Brownian and Langevin models considered here, as linear stochastic differential equations, are very simple examples whose likelihoods could be computed analytically -- alternative models which are nonlinear, have higher dimension, or have more complicated correlation structure will likely have intractable likelihoods.
For example, for spatially inhomogeneous flows, such as in the atmosphere or oceans, nonlinear models arise with spatially-varying (and hence high-dimensional) parameters.
Fortunately, the collection of methods referred to as approximate Bayesian computation have been developed to deal with this problem. For example,~\cite{Carson} used the SMC$^2$ (`sequential Monte Carlo squared') algorithm to compare SDE models of glacial--interglacial cycles with intractable likelihoods.

There is the further issue of performing the integration required to obtain the evidence as in~\eqref{evidence}. When the posterior is sufficiently Gaussian-like, i.e. peaked around a single mode, Laplace's method can be very accurate~\citep{KassRaftery} as well as cheap, however, this requires (at least an approximation to) the Hessian of the log-posterior at its mode. \changes{Aside from Laplace's method, \citet{Krog2017} and \citet{Thapa2018} have implemented the nested sampling algorithm of Skilling~\citep{Skilling} to calculate the evidence in similar analyses}, while the line of work by~\cite{hannart2016dada},~\cite{carrassi2017estimating} and~\cite{metref2019estimating} has sought to perform model evidence estimation using ensemble-based data assimilation methods originally designed for state estimation in the context of incomplete, noisy observations of high-dimensional dynamical systems.

While BMC inevitably comes with computational challenges in complex problems, there are many cases where it can feasibly be applied, and, where it cannot, it should serve as a useful theoretical starting point, with alternative methods measured by how well their conclusions agree with those of BMC.


\small{
\vspace{0.5em} \noindent \textbf{Funding.} M.B. was supported by the MAC-MIGS Centre for Doctoral Training under EPSRC grant EP/S023291/1.
J.V. was supported by the EPSRC Programme Grant EP/R045046/1: Probing Multiscale Complex Multiphase Flows with Positrons for Engineering and Biomedical Applications (PI: Prof. M. Barigou, University of Birmingham).

\vspace{0.5em} \noindent \textbf{Declaration of interests.} The authors report no conflict of interest.

\vspace{0.5em} \noindent \textbf{Data availability statement.} The code required to reproduce the results herein is available at \href{https://doi.org/10.5281/zenodo.5820320}{doi.org/10.5281/zenodo.5820320} and the trajectory data generated with the 2D turbulence model is available at \href{https://doi.org/10.7488/ds/3267}{doi.org/10.7488/ds/3267}.
}

\appendix

\section{Langevin likelihood for position observations}\label{Lang_MargLike}

To derive $p(\Delta \mathcal{X}_{\tau}|\mathcal{M}_L(\gamma, k))$ we simplify notation by recognising that all particles are independent under $\mathcal{M}_L$ and that dynamics in each spatial dimension are independent. We therefore need only calculate $p(\Delta \mathcal{X}_{\tau}|\mathcal{M}_L(\gamma, k))$ in the one-dimensional, single-particle case. We proceed by: (i) showing that the joint process of particle position and velocity is an order-one vector autoregressive process, or VAR(1) process, and hence, has a Gaussian likelihood, (ii) calculating the mean and covariance for a sequence of joint position--velocity observations, and (iii) marginalising this likelihood to find $p(\Delta \mathcal{X}_{\tau}|\mathcal{M}_L(\gamma, k))$.

It can be shown that for the one-dimensional Langevin equation
\begin{align}\label{Langevin_trans}
    \bm{Y}_n|\bm{Y}_{n-1}
    \sim \mathcal{N}\left(
    \begin{pmatrix}
    U_n\varphi(\gamma\tau)\tau \\
    U_n \e^{-\gamma\tau}
    \end{pmatrix},\,
    \mathsfbi{C}
    \right),
\end{align}
where $\bm{Y}_n:=(\Delta X_n,U_{n+1})^{\mathrm{T}}$ and 
\begin{subequations}
\begin{align}\label{C_ij}
    \mathsfi{C}_{11}&= 2k\tau\left(1 - 2\varphi(\gamma\tau)+\varphi(2\gamma\tau)\right),\\
    \mathsfi{C}_{12}&=\mathsfi{C}_{21}=k(\varphi(\gamma\tau)\gamma\tau)^2,\\
    \mathsfi{C}_{22} &= 2k\gamma^2\tau\varphi(2\gamma\tau).
\end{align}
\end{subequations}
This follows from the well-known solution of the Ornstein--Uhlenbeck process,
\begin{align}
    U(t) = U(0)\e^{-\gamma t} + \gamma\sqrt{2k}\int_0^t \e^{-\gamma(t-t')}\,\text{d}W(t')
\end{align}
and the corresponding solution for the position,
\begin{subequations}
\begin{align}
    X(t) &= X(0) + \int_0^t U(t')\,\text{d}t'\\
    &= X(0) + U(0)\varphi(\gamma t)t-\sqrt{2k}\int_0^t\e^{-\gamma(t-t')}\,\text{d}W(t')+\sqrt{2k}\,W(t).
\end{align}
\end{subequations}

Therefore, we can write the Langevin model in the time-discretised form
\begin{align}\label{Langevin_VAR}
    \bm{Y}_n
    =
    \mathsfbi{A}
    \bm{Y}_{n-1}
    + \bm{\varepsilon}_n,
\end{align}
 where 
 \begin{align}
      \mathsfbi{A} =
    \begin{pmatrix}
    0 & \varphi(\gamma\tau)\tau \\
    0 & \e^{-\gamma\tau}
    \end{pmatrix},
 \end{align}
 and  $\bm{\varepsilon}_n$ is a mean-zero white-noise process with covariance matrix $\mathsfbi{C}$.

The discrete process~\eqref{Langevin_VAR} has the form of a VAR(1) process.
Furthermore, $\mathsfbi{Y}_n$ is stationary with mean and stationary variance
\begin{align}\label{G0}
    \bm{\mu} = 
    \begin{pmatrix}
    0\\
    0
    \end{pmatrix},\,\,
    \mathsfbi{V}=
    \begin{pmatrix}
    2k\tau(1-\varphi(\gamma \tau)) & k\varphi(\gamma\tau)\gamma\tau\\
    k\varphi(\gamma\tau)\gamma\tau & k\gamma
    \end{pmatrix}.
\end{align}
To see this, note that the marginal distribution of $U(t)$ at any time is given by the stationary distribution of the Ornstein--Uhlenbeck process,
\begin{align}\label{U_stat}
    U(t) \sim \mathcal{N} \left( 0, k\gamma \right),
\end{align}
which gives $\mu_2$ and $\mathsfi{V}_{22}$.
Using~\eqref{Langevin_trans},~\eqref{U_stat} and lemma~\ref{lemma1} yields $\mu_1$ and $\mathsfi{V}_{11}$.
Finally, $\mathsfi{V}_{12}=\mathsfi{V}_{21}$ can be calculated using~\eqref{Langevin_trans} and the law of total covariance -- specifically,
\begin{subequations}
\begin{align}
    \text{Cov}(\Delta X_n, U_{n+1}) &= \mathbb{E}\left[\text{Cov}(\Delta X_n, U_{n+1}|U_n)\right] + \text{Cov}\left(\mathbb{E}\left[\Delta X_n|U_n\right],\,\mathbb{E}\left[U_{n+1}|U_n\right]\right)\\
    &= \mathsfi{C}_{12} + \text{Cov}\left(U_n\varphi(\gamma\tau)\tau,\, U_n \e^{-\gamma\tau}\right)\\
    &= \mathsfi{C}_{12} + \varphi(\gamma\tau)\tau\e^{-\gamma\tau}\text{Var}(U_n)\\
    &= k\varphi(\gamma\tau)\gamma\tau,
\end{align}
\end{subequations}
recalling that $\text{Var}(U_n)=k\gamma$.

The autocovariance of $\bm{Y}_n$ is defined as
\begin{align}
    \mathsfbi{G}(m) := \mathbb{E}[(\bm{Y}_n-\bm{\mu})(\bm{Y}_{n-m}-\bm{\mu})^{\mathrm{T}}],
\end{align}
where $m\in\mathbb{Z}$. Notice that $\mathsfbi{G}(0)$ is the stationary variance of $\bm{Y}_n$. 
Postmultiplying~\eqref{Langevin_VAR} by $\bm{Y}_{n-m}^{\mathrm{T}}$ and taking expectations gives
\begin{align}
\mathbb{E}\left[
    \bm{Y}_n
    \bm{Y}_{n-m}^{\mathrm{T}}
    \right]
    =
    \mathsfbi{A}\,
\mathbb{E}\left[
    \bm{Y}_{n-1}
    \bm{Y}_{n-m}^{\mathrm{T}}
    \right]
    + 
\mathbb{E}\left[
    \bm{\varepsilon}_n
    \bm{Y}_{n-m}^{\mathrm{T}}
    \right].
\end{align}
Thus, for $m>0$, since $\bm{Y}_{n-m}$ is independent of $\bm{\varepsilon}_n$,
\begin{align}\label{recurrence}
    \mathsfbi{G}(m) = \mathsfbi{A}\, \mathsfbi{G}(m-1)
\end{align}
Therefore, $\mathsfbi{G}(m)$ can be calculated recursively for $m>0$ as
\begin{subequations}
\begin{align}\label{recurrence_sol}
    \mathsfbi{G}(m) &= \mathsfbi{A}^m\, \mathsfbi{G}(0)\\
    &= \mathsfbi{A}^m\, \mathsfbi{V}.
\end{align}
\end{subequations}
Note that~\eqref{recurrence} is an instance of a Yule--Walker equation~\citep[pp. 26-27]{lutkepohl}.

Thus, the joint distribution of a sequence of observations $\{\bm{Y}_n:n\in\{0, \cdots, N_{\tau} - 1\}\}$ is given by 
\begin{align}\label{Y}
    \begin{pmatrix}
    \bm{Y}_0\\
    \bm{Y}_1\\
    \vdots\\
    \bm{Y}_{N_{\tau}-1}
    \end{pmatrix}
    \sim\mathcal{N}\left(\bm{0},
    \begin{pmatrix}
    \mathsfbi{G}(0)&\mathsfbi{G}(1)&\cdots&\mathsfbi{G}(N_{\tau}-1)\\
    \mathsfbi{G}(1) & \ddots & \ddots & \vdots\\
    \vdots & \ddots\\
    &&&\mathsfbi{G}(1)\\
    \mathsfbi{G}(N_{\tau}-1) & & \mathsfbi{G}(1) & \mathsfbi{G}(0)
    \end{pmatrix}\right).
\end{align}

Marginalising~\eqref{Y} for the distribution of $(\Delta X_0, \cdots, \Delta X_{N_{\tau}-1})^{\mathrm{T}}$ we find
\begin{align}\label{DX}
    \begin{pmatrix}
    \Delta X_0\\
    \Delta X_1\\
    \vdots\\
    \Delta X_{N_{\tau}-1}
    \end{pmatrix}
    \sim\mathcal{N}\left(\bm{0},
    \begin{pmatrix}
    \mathsfi{G}_{11}(0)&\mathsfi{G}_{11}(1)&\cdots&\mathsfi{G}_{11}(N_{\tau}-1)\\
    \mathsfi{G}_{11}(1) & \ddots & \ddots & \vdots\\
    \vdots & \ddots\\
    &&&\mathsfi{G}_{11}(1)\\
    \mathsfi{G}_{11}(N_{\tau}-1) & & \mathsfi{G}_{11}(1) & \mathsfi{G}_{11}(0)
    \end{pmatrix}\right).
\end{align}
Using~\eqref{G0} and~\eqref{recurrence} it is easy to see that for $m\geq1$
\begin{align}
    \mathsfbi{G}(m) = 
    \begin{pmatrix}
    k\gamma\tau^2\varphi^2(\gamma\tau)\e^{-(m-1)\gamma\tau}
    & k\gamma\tau\varphi(\gamma\tau)\e^{-(m-1)\gamma\tau}\\
    k\gamma\tau\varphi(\gamma\tau)\e^{-m\gamma\tau}
    & k\gamma \e^{-m\gamma\tau}
    \end{pmatrix}.
\end{align}
Hence, in particular,
\begin{align}\label{G11m}
    \mathsfi{G}_{11}(m) = k\gamma\tau^2\varphi^2(\gamma\tau)\e^{-(m-1)\gamma\tau}.
\end{align}
The likelihood $p(\Delta X_0, \cdots, \Delta X_{N_{\tau}-1})$ is determined by~\eqref{DX} and~\eqref{G11m}.

\section{}
\begin{lemma}
\label{lemma1}
If $X\sim\mathcal{N}(\mu,\,\sigma^2)$ and $Y|X\sim\mathcal{N}(aX,\,\tau^2)$, then $Y\sim\mathcal{N}(a\mu,\,a^2\sigma^2 +\tau^2)$.
\end{lemma}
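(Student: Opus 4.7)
The cleanest route is the characteristic-function (or equivalently MGF) computation, which sidesteps any direct integration. The plan is to compute $\mathbb{E}[e^{itY}]$ by conditioning on $X$, recognise the conditional characteristic function of a Gaussian, and then recognise the outer expectation as the characteristic function of an affine transformation of $X$. Since the characteristic function determines the distribution uniquely, matching forms will yield the result.

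Concretely, I would first write
\begin{equation*}
\mathbb{E}[e^{itY}] = \mathbb{E}\bigl[\mathbb{E}[e^{itY}\mid X]\bigr].
\end{equation*}
By hypothesis the inner conditional distribution is $\mathcal{N}(aX,\tau^2)$, so the inner expectation equals $\exp\bigl(itaX - \tfrac{1}{2}t^2\tau^2\bigr)$. Pulling the $X$-independent factor out,
\begin{equation*}
\mathbb{E}[e^{itY}] = e^{-\tfrac{1}{2}t^2\tau^2}\,\mathbb{E}[e^{itaX}].
\end{equation*}
Then I would use that $X\sim\mathcal{N}(\mu,\sigma^2)$ has characteristic function $\mathbb{E}[e^{isX}]=\exp(is\mu-\tfrac{1}{2}s^2\sigma^2)$ with $s=ta$, giving
\begin{equation*}
\mathbb{E}[e^{itY}] = \exp\!\left(ita\mu - \tfrac{1}{2}t^2(a^2\sigma^2+\tau^2)\right),
\end{equation*}
which is the characteristic function of $\mathcal{N}(a\mu,\,a^2\sigma^2+\tau^2)$, and the conclusion follows by the uniqueness of characteristic functions.

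An equivalent and equally short argument is to observe that the hypothesis is exactly the statement that $Y$ admits the representation $Y = aX + Z$ with $Z\sim\mathcal{N}(0,\tau^2)$ independent of $X$; then $Y$ is an affine combination of independent Gaussians, hence Gaussian, and the law of total expectation and variance give $\mathbb{E}[Y]=a\mu$ and $\mathrm{Var}(Y)=a^2\sigma^2+\tau^2$. There is no real obstacle here: the only small subtlety is that computing mean and variance by iterated expectation does not by itself establish Gaussianity, so one must supply either the independence/representation argument or the characteristic-function computation to pin down the distributional shape. I would choose the characteristic-function version for being self-contained in one short display.
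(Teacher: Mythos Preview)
Your proposal is correct and follows essentially the same argument as the paper: the paper computes the moment generating function $M_Y(t)=\mathbb{E}[e^{Yt}]$ by conditioning on $X$, pulls out the $X$-independent Gaussian factor, and recognises the remaining expectation as $M_{aX}(t)$, which is exactly your characteristic-function computation with $it$ replaced by $t$. The only difference is cosmetic (CF versus MGF), and your additional remark about the representation $Y=aX+Z$ is a nice supplement not present in the paper.
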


\begin{proof}
Consider the moment generating function of $Y$, $M_Y(t)$. Recall that for a Gaussian random variable such as $X$ the moment generating function is
\begin{align}
    M_X(t) := \mathbb{E}_X\left[\e^{Xt}\right] = \e^{\mu t+\sigma^2 t^2 / 2};
\end{align}
similarly, since $aX\sim \mathcal{N}\left(a\mu,a^2\sigma^2\right)$,
\begin{align}
    M_{aX}(t) := \mathbb{E}_X\left[\e^{aXt}\right] = \e^{a\mu t+a^2\sigma^2 t^2 / 2}.
\end{align}

Now,
\begin{subequations}
\begin{align}
    M_Y(t) &= \mathbb{E}_Y\left[\e^{Yt}\right]\\
    &= \mathbb{E}_X\left[\mathbb{E}_{Y|X}\left[\e^{Yt}\right]\right]\\
    &= \mathbb{E}_X\left[\e^{aXt + \tau^2t^2/2}\right]\\
    &= \e^{\tau^2t^2/2}\mathbb{E}_X\left[\e^{aXt}\right] = \e^{(a\mu) t+(a^2\sigma^2+\tau^2)t^2/2},
\end{align}
\end{subequations}
which we can recognise as the moment generating function of a Gaussian random variable with mean $a\mu$ and variance $a^2\sigma^2+\tau^2$.
\end{proof}

\bibliographystyle{plainnat}
\bibliography{main}

\end{document}